\documentclass[journal]{IEEEtran}
\usepackage{amsmath,amssymb,mathtools,bm} 
\usepackage{caption}
\usepackage{graphicx}
\usepackage{subcaption} %
\usepackage{cite}
\usepackage{float}
\usepackage{graphicx}
\usepackage{booktabs,multirow,array}    
\usepackage{graphicx}
\usepackage{algorithm}        
\usepackage{algorithmic}      
\usepackage{enumitem} 
\usepackage[hidelinks]{hyperref}        \usepackage{xcolor}   
\usepackage{stfloats}         
\usepackage{tikz}
\usetikzlibrary{arrows.meta}
\usepackage{amsmath}

\usepackage{url}
\usepackage{xcolor}
\usepackage{hyperref}

\hypersetup{
  colorlinks=true,
  citecolor=blue,   
  urlcolor=blue,
  linkcolor=green!60!black 
}

\makeatletter
\AtBeginDocument{%
  \let\HyOrg@refstepcounter\refstepcounter
  \def\refstepcounter#1{%
    \HyOrg@refstepcounter{#1}%
    \ifx\Hy@param\@empty\else
      \ifx\Hy@param\Hy@equationstring
        \hypersetup{linkcolor=blue!60!black}%
      \else
        \hypersetup{linkcolor=blue!60!black}%
      \fi
    \fi
  }%
}
\makeatother

\newcommand{\jj}{\mathrm{j}}  
  
\DeclareMathOperator{\diagop}{diag}
\newcommand{\Diag}[1]{\diagop\!\left(#1\right)} 
\newcommand{\D}[1]{\mathrm{diag}\!\left(#1\right)}

\newtheorem{theorem}{Theorem}

\newtheorem{remark}{Remark}

\begin{document}

\title{A Wirtinger Power Flow Jacobian Singularity Condition for Voltage Stability in Converter-Rich Power Systems}

\author{Ahmed Mesfer Alkhudaydi,~\IEEEmembership{Student Member,~IEEE} and
        Bai Cui,~\IEEEmembership{Member,~IEEE}%
\thanks{A. M. Alkhudaydi and B. Cui are with the Department of Electrical and Computer Engineering, Iowa State University, Ames, IA 50011 USA (e-mails: Ahmed92@iastate.edu, baicui@iastate.edu).
}
}

\maketitle

\begin{abstract}
The progression of modern power systems towards converter-rich operations calls for new models and analytics in steady-state voltage stability assessment. The classic modeling assumption of the generators as stiff voltage sources no longer holds. Instead, the voltage- and current-limited behaviors of converters need to be considered. In this paper, we develop a Wirtinger derivative-based formulation for the power flow Jacobian and derive an explicit sufficient condition for its singularity. Compared to existing works, we extend the explicit sufficient singularity condition to incorporate all bus types instead of only slack and PQ types. We prove that the singularity of the alternative Jacobian coincides with that of the conventional one. A bus-wise voltage stability index, denoted $C_{\mathrm{W}}$, is derived from diagonal dominance conditions. The condition \(\min_i C_{W,i}\) being greater than one certifies the nonsingularity of the Jacobian and provides a fast, non-iterative stability margin. Case studies in standard IEEE test systems show that the proposed index yields less conservative and more localized assessments than classical indices such as the L-index, the $K_{\mathrm{R}}$ index, and the SCR index.

\end{abstract}

\begin{IEEEkeywords}
Voltage stability, Wirtinger calculus, Complex Jacobian-based, power flow solvability, Inverter-based resources (IBRs), diagonal dominance.
\end{IEEEkeywords}

\section{Introduction}
\IEEEPARstart{T}{he} rapid integration of inverter-based resources 
(IBRs) has fundamentally reshaped voltage stability mechanisms in 
modern power systems~\cite{Hatziargyriou2021_Definition, Lasseter2020}. As converter penetration increases, steady-state voltage behavior transitions from synchronous-generator dominance to regimes governed by Thevenin equivalent impedance at the points of common coupling (PCCs), converter current limits, and control-mode switching. These challenges are explicitly reflected in emerging grid codes such as IEEE~Standard~2800--2022~\cite{IEEE2800}, which  underscores the need for new voltage stability assessment tools applicable to power systems with high penetration of IBRs.

The classical framework for the stability of the power system, first established by 
the IEEE--CIGR\'E joint task force in 2004~\cite{Kundur2004} and recently revised to address power-electronic phenomena~\cite{Hatziargyriou2021_Definition}, recognizes voltage stability as a core domain alongside angle and frequency stability~\cite{Ge2022_VSG,LaivaRoca2022_FrequencyResponse,Ranjan2021_HybridDVR}. Voltage stability can be broadly classified into large-disturbance (transient) stability, which describes the dynamic voltage response  under severe disturbances, and small-disturbance (quasi-steady-state or static) stability, which characterizes the steady-state sensitivity of bus voltages to incremental perturbations~\cite{Hatziargyriou2021_Definition,VanCutsemVournas1998}. This paper focuses on the latter, specifically the solvability of quasi-steady-state power flow equations under incremental stress.

Traditional static voltage stability tools that include continuation power flow (CPF)~\cite{ajjarapu1992cpf}, Jacobian singularity analysis, and minimum singular value indices provide rigorous system-level margins near the $P$-$V$ nose point~\cite{VanCutsemVournas1998,kundur1994stability}. However, these methods are computationally intensive and offer limited insight into bus-level vulnerability, particularly when converter buses operate under voltage or current magnitude constraints~\cite{Cespedes2015,Zhong2019}. Bus voltage-based indices such as the $L$-index~\cite{KesselGlavitsch1986} 
and impedance-based indices such as the multi-infeed impedance index $K_{\mathrm{R}}$~\cite{Wang2024_Impedance}, and SCR-based metrics including site-dependent SCR (SDSCR)~\cite{Wu2018_SDSCR} and classical SCR~\cite{Wu2018_SDSCR,Lan2025_LPSCR}, provide faster assessments but do not explicitly model converter control modes and operating constraints; classical SCR further suffers from the fact that its Th\'evenin-based formulation absorbs AC loads into the equivalent source, which distorts stability indications in converter-rich, load-proximate networks~\cite{Lan2025_LPSCR}. In prior work~\cite{WangCuiWang2017}, a Wirtinger-calculus-based solvability condition was introduced, which yields less conservative certificates than classical real variable formulations and enables bus-wise stability assessment. However, the formulation was restricted to PQ buses and does not account for voltage-regulated or current-limited converter operation, which are important operating modes in IBR-dominated systems.

This paper addresses this gap by extending the C-index of~\cite{WangCuiWang2017} to a unified bus-type-aware voltage stability index, denoted $C_{\mathrm{W}}$, for systems with mixed unconstrained, voltage-regulated, and current-limited buses. The development has two layers. First, we derive a reduced Wirtinger Jacobian and prove that its singular set coincides with that of the conventional power flow Jacobian, preserving the classical voltage collapse boundary. Second, we use a strict row-diagonal-dominance condition on the reduced Jacobian to construct a computationally efficient per-bus solvability index. In this way, the proposed framework combines exact Jacobian-level equivalence results with a practical sufficient solvability certificate that is suitable for fast screening. The main contributions of this paper are as follows:
\begin{enumerate}
    \item A unified Wirtinger-based reformulation of the steady-state power flow equations, in which voltage and current limits are embedded as tangent-subspace constraints, yields a compact reduced Jacobian $J_{\mathrm{red}}$.
    \item A rigorous proof that the singularity sets of the conventional real-valued Jacobian and the reduced Wirtinger Jacobian coincide, which establishes formal equivalence with the classical voltage collapse boundary.
    \item The per-bus solvability margin $C_{\mathrm{W},i}$, derived from the row-diagonal dominance of $J_{\mathrm{red}}$, provides a sufficient condition for power flow solvability under mixed bus constraints.
\end{enumerate}

The proposed index complements CPF-based margin estimation by enabling fast, bus-level screening of voltage stability using steady-state phasor measurements together with the network impedance matrix, which makes it well suited for real-time monitoring and operational decision support in converter-rich power systems.

The remainder of this paper is organized as follows. Section~\ref{sec:model} introduces the system modeling and Wirtinger preliminaries. Section~\ref{sec:tangent-framework} presents the unified tangent framework for voltage and current constraints. Section~\ref{sec:reduced-jacobian} derives the reduced Wirtinger Jacobian, and Section~\ref{sec:multi-bus} generalizes the formulation and defines the proposed index $C_{\mathrm{W}}$. Section~\ref{sec:case_study} presents numerical case studies and discusses the results, and Section~\ref{sec:Conclusion} concludes the paper.
\section{System Modeling and Wirtinger Preliminaries}
\label{sec:model}

We consider a per-phase network model with one slack bus and $n_u{+}n_c$ buses at the points of common coupling (PCCs), comprising $n_u$ unconstrained buses and $n_c$ constrained buses. Distributed generators, loads, and other converter-interfaced devices at the PCCs are represented through their net power injections. The slack bus provides the system reference phasor.

\subsection{Th\'evenin Form and Bus Partitioning}

Let $\mathbf{Y} \in \mathbb{C}^{(n_u+n_c+1)\times(n_u+n_c+1)}$ denote the bus admittance matrix. Eliminating the slack bus yields the reduced Th\'evenin form 
\begin{equation}
\mathbf{V} = \mathbf{E} + \mathbf{Z}\mathbf{I},
\qquad 
\mathbf{Z} = \mathbf{Y}_{\mathcal{NN}}^{-1}, 
\quad 
\mathbf{E} = -\mathbf{Y}_{\mathcal{NN}}^{-1}\mathbf{Y}_{\mathcal{N}s}V_s ,
\label{eq:thevenin}
\end{equation}
where $\mathbf{V} \in \mathbb{C}^{n_u+n_c}$ and $\mathbf{I} \in \mathbb{C}^{n_u+n_c}$ are the PCC voltage and current injection vectors, $\mathbf{Z}$ is the reduced impedance matrix, and $\mathbf{E}$ is the equivalent source voltage determined by the slack bus. The bus set is partitioned as
\begin{equation*}
\mathcal{N} = \mathcal{U} \cup \mathcal{C}, 
\quad 
\mathcal{U} = \{1,\dots,n_u\}, \quad 
\mathcal{C} = \{n_u+1,\dots,n_u+n_c\},
\end{equation*}
with $\mathcal{U}$ denoting unconstrained buses and $\mathcal{C}$ denoting magnitude-constrained buses. For an unconstrained bus $i \in \mathcal{U}$, the complex power injection is
\begin{equation}
S_i = V_i I_i^{*}, 
\qquad 
S_i = P_i + jQ_i .
\label{eq:unconstrained-power}
\end{equation}
For a constrained bus $j \in \mathcal{C}$, the active power is specified while either voltage or current magnitude is enforced:
\begin{equation}
P_j = \Re\{ V_j I_j^{*} \}, 
\qquad 
|V_j| = \text{const.} \;\text{ or }\; |I_j| = \text{const.},
\label{eq:constrained-power}
\end{equation}
where the corresponding reactive power $Q_j$ (or current phase) is implicitly determined by the power flow solution to satisfy the constraint.  

\subsection{Wirtinger Calculus Preliminaries}

Voltage stability assessment requires evaluating the sensitivity of complex power with respect to phasor perturbations. Since complex conjugation is not holomorphic (i.e., the standard complex derivative does not exist for functions of $z$ and $z^*$ simultaneously), Wirtinger calculus treats a complex variable and its conjugate as independent analytic coordinates~\cite{WangCuiWang2017}.  
For a complex-valued function $f(z,z^{*})$ with $z = x + jy$, the Wirtinger derivatives are defined by
\begin{equation}
\frac{\partial f}{\partial z} 
= \frac{1}{2} \left( \frac{\partial f}{\partial x} - j \frac{\partial f}{\partial y}\right),
\qquad
\frac{\partial f}{\partial z^{*}} 
= \frac{1}{2}\left( \frac{\partial f}{\partial x} + j \frac{\partial f}{\partial y}\right).
\label{eq:wirtinger}
\end{equation}
This formulation enables the direct evaluation of complex sensitivities in phasor form by treating any complex variable and its conjugate as independent analytic coordinates~\cite{dvzafic2018high}.

\subsection{Full Wirtinger State and Jacobian Formulation}

The full Wirtinger state and function vectors are defined by stacking complex power injections at unconstrained buses and real power injections at constrained buses:
\begin{equation*}
\underbrace{
\mathbf{x}_{\mathrm{full}}
=
\begin{bmatrix} 
\mathbf{I}_{\mathcal{U}} \\[2pt]
\mathbf{I}_{\mathcal{U}}^{*} \\[2pt]
\mathbf{I}_{\mathcal{C}} \\[2pt]
\mathbf{I}_{\mathcal{C}}^{*}
\end{bmatrix}
\in\mathbb{C}^{2n_u+2n_c}
}_{\text{full state}},
\qquad
\underbrace{
\mathbf{F}_{\mathrm{full}}
=
\begin{bmatrix} 
\mathbf{S}_{\mathcal{U}} \\[2pt]
\mathbf{S}_{\mathcal{U}}^{*} \\[2pt]
\mathbf{P}_{\mathcal{C}} \\[2pt]
\mathbf{P}_{\mathcal{C}}^{*}
\end{bmatrix}
\in\mathbb{C}^{2n_u+2n_c}
}_{\text{full function}}.
\end{equation*}

The resulting full Wirtinger Jacobian $\mathbf{J}_{\mathrm{full}} \in \mathbb{C}^{(2n_u+2n_c)\times(2n_u+2n_c)}$ contains all first-order power sensitivities with respect to the complex current coordinates

\begin{equation}
\mathbf{J}_{\mathrm{full}}
=
\frac{\partial \mathbf{F}_{\mathrm{full}}}{\partial \mathbf{x}_{\mathrm{full}}}
=
\begin{bmatrix}
\frac{\partial \mathbf{S}_{\mathcal{U}}}{\partial \mathbf{I}_{\mathcal{U}}} &
\frac{\partial \mathbf{S}_{\mathcal{U}}}{\partial \mathbf{I}_{\mathcal{U}}^{*}} &
\frac{\partial \mathbf{S}_{\mathcal{U}}}{\partial \mathbf{I}_{\mathcal{C}}} &
\frac{\partial \mathbf{S}_{\mathcal{U}}}{\partial \mathbf{I}_{\mathcal{C}}^{*}} \\[4pt]
\frac{\partial \mathbf{S}_{\mathcal{U}}^{*}}{\partial \mathbf{I}_{\mathcal{U}}} &
\frac{\partial \mathbf{S}_{\mathcal{U}}^{*}}{\partial \mathbf{I}_{\mathcal{U}}^{*}} &
\frac{\partial \mathbf{S}_{\mathcal{U}}^{*}}{\partial \mathbf{I}_{\mathcal{C}}} &
\frac{\partial \mathbf{S}_{\mathcal{U}}^{*}}{\partial \mathbf{I}_{\mathcal{C}}^{*}} \\[4pt]
\frac{\partial \mathbf{P}_{\mathcal{C}}}{\partial \mathbf{I}_{\mathcal{U}}} &
\frac{\partial \mathbf{P}_{\mathcal{C}}}{\partial \mathbf{I}_{\mathcal{U}}^{*}} &
\frac{\partial \mathbf{P}_{\mathcal{C}}}{\partial \mathbf{I}_{\mathcal{C}}} &
\frac{\partial \mathbf{P}_{\mathcal{C}}}{\partial \mathbf{I}_{\mathcal{C}}^{*}} \\[4pt]
\frac{\partial \mathbf{P}_{\mathcal{C}}^{*}}{\partial \mathbf{I}_{\mathcal{U}}} &
\frac{\partial \mathbf{P}_{\mathcal{C}}^{*}}{\partial \mathbf{I}_{\mathcal{U}}^{*}} &
\frac{\partial \mathbf{P}_{\mathcal{C}}^{*}}{\partial \mathbf{I}_{\mathcal{C}}} &
\frac{\partial \mathbf{P}_{\mathcal{C}}^{*}}{\partial \mathbf{I}_{\mathcal{C}}^{*}}
\end{bmatrix}.
\label{eq:fulljac}
\end{equation}

The singularity of $\mathbf{J}_{\mathrm{full}}$ is analyzed 
in Sec.~\ref{sec:tangent-framework}, where the constrained bus tangent relations are used to eliminate dependent columns and yield the reduced Wirtinger Jacobian.


\section{Unified Tangent-vector Framework for Bus Constraint Modeling}
\label{sec:tangent-framework}

In converter-dominated power systems, control and protection mechanisms impose magnitude constraints at the PCCs through their control and protection systems. 
For example, a PV bus maintains a prescribed voltage magnitude through reactive current control, whereas a converter operating under current-limited conditions enforces a fixed current magnitude. 
In both cases, admissible perturbations in the complex-current vector are confined to a one-dimensional tangent subspace representing directions of permissible motion that preserve the magnitude constraint.

The proposed formulation represents both voltage- and 
current-magnitude constraints as constant magnitude loci 
in the complex plane. Within this framework, the differential relationship between a complex variable and its conjugate coordinate is expressed through a unit-modulus tangent factor that enforces motion along the admissible manifold. This section develops a unified mathematical representation of these magnitude-constrained operating loci within the Wirtinger framework, thereby enabling a consistent differential formulation applicable across all bus types.

\subsection{Concept of Tangent Constraints}
In conventional unconstrained operation, a converter bus  has two independent real degrees of freedom, i.e., the real and imaginary parts of the complex current $I_i$ or the complex voltage $V_i$. Under a fixed-magnitude constraint (\(|V|=\mathrm{const}\) or \(|I|=\mathrm{const}\)), admissible perturbations \((dV,\,dI)\) must satisfy the corresponding magnitude condition. These constraints define a one-dimensional subspace in the complex plane along which the state of the system may evolve without violating the constraint. The admissible direction of differential motion within this subspace is parameterized by a complex scalar of unit modulus, which captures the allowable phase variation that preserves the constraint~\cite{sym15040865,poongothai2021novel,chu2023stability}.

\subsubsection{Voltage Constraint Bus}
In a voltage-limited bus $i$, the voltage magnitude remains constant, therefore $d|V_i|=0$. 
Differentiating $|V_i|^2 = V_i V_i^{*}$ gives
\begin{equation}
V_i^{*}\,dV_i + V_i\,dV_i^{*} = 0.
\label{eq:pv-diff}
\end{equation}
Taking into account the local self-sensitivity through the diagonal term of the impedance matrix with the network relation $d\mathbf V = \mathbf Zd\mathbf I$, we have $dV_i = Z_{ii}dI_i$ and $dV_i^{*} = Z_{ii}^{*}dI_i^{*}$.  
Substituting into~\eqref{eq:pv-diff} yields the tangent relation
\begin{equation}
dI_i^{*} = \kappa_i\, dI_i, 
\qquad 
\kappa_i \triangleq -\,\frac{V_i^{*} Z_{ii}}{V_i Z_{ii}^{*}},
\label{eq:kappa-def}
\end{equation}
where $|\kappa_i|=1$.

The unit-modulus factor $\kappa_i$ rotates $dI_i^{*}$ 
to preserve $|V_i|$ while allowing phase variation.

\subsubsection{Current Constraint Bus}
Under current-limited operation, the magnitude of the injected current is fixed, i.e., $|I_i| = I_{i,\max}=\mathrm{const}$. 
Differentiating $|I_i|^2 = I_i I_i^{*}$ gives

\begin{equation}
I_i^{*}\,dI_i + I_i\,dI_i^{*} = 0
\;\;\Longrightarrow\;\;
dI_i^{*} = \zeta_i\, dI_i, 
\quad 
\zeta_i \triangleq -\,\frac{I_i^{*}}{I_i}
\label{eq:zeta-def}
\end{equation}
where $|\zeta_i|=1$ defines the rotation of the conjugate current differential required to preserve the constant-current magnitude. 
As in the voltage-constrained case, admissible increments lie along a tangent direction orthogonal to the current vector; this preserves its magnitude while allowing for phase adjustment. 
This is consistent with the current-limiting requirements 
of IEEE~Std.~2800--2022~\cite{IEEE2800}.

Equations~\eqref{eq:kappa-def} and~\eqref{eq:zeta-def} show that both voltage- and current-magnitude constraints restrict admissible perturbations to a one-dimensional tangent subspace in the complex plane. Figs.~\ref{fig:constV_tangent} and~\ref{fig:constI_tangent} 
illustrate the geometric interpretation for the constant-voltage and constant-current cases, respectively. In Fig.~\ref{fig:constV_tangent}, $V_i$ lies on the circle $|V_i| = \mathrm{const}$, and the unit-modulus factor $\kappa_i$ defines the tangent direction satisfying $dI_i^{*} = \kappa_i dI_i$, which preserves $|V_i|$. 
In Fig.~\ref{fig:constI_tangent}, $I_i$ lies on the circle 
$|I_i| = \mathrm{const}$, and the rotation $\zeta_i$ defines the tangent relation $dI_i^{*} = \zeta_i dI_i$, which ensures $d|I_i| = 0$.

\begin{figure}[t]
  \centering
  \includegraphics[width=0.7\columnwidth]{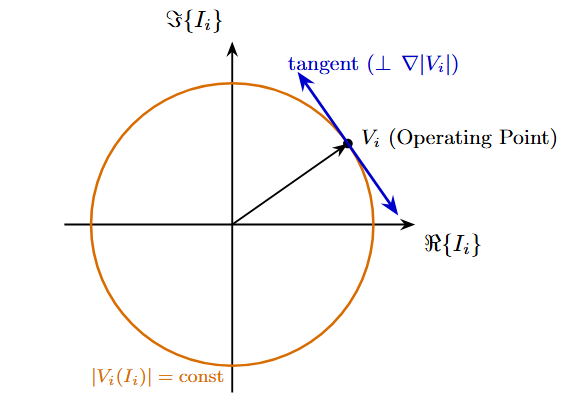}
  \caption{Tangent direction on the constant-$|V|$ space.}
  \label{fig:constV_tangent}
\end{figure}

\begin{figure}[t]
  \centering
  \includegraphics[width=0.7\columnwidth]{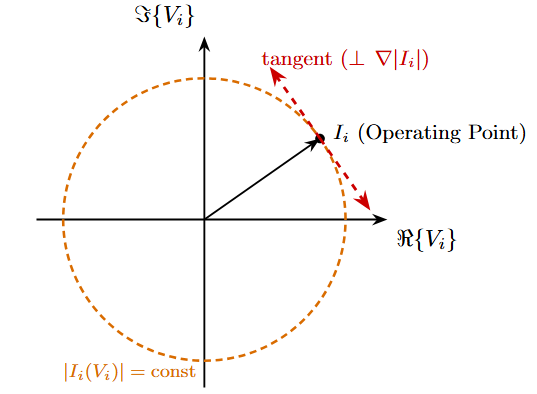}
  \caption{Tangent direction on the constant-$|I|$ space.}
  \label{fig:constI_tangent}
\end{figure}

\subsection{Generic Tangent Representation}
The voltage and current magnitude constraints admit a unified representation in the complex plane. For each bus~$i$, admissible perturbations satisfy the tangent condition 
\begin{equation}
dI_i^{*} = \xi_i\, dI_i, 
\qquad 
|\xi_i| = 1, \quad i \in \mathcal{C},
\label{eq:xi-generic1}
\end{equation}
where the tangent factor $\xi_i$ is defined via
\begin{equation}
\xi_i =
\begin{cases}
0, & i \in \mathcal{U} \ \text{(unconstrained bus)}, \\[3pt]
\kappa_i, & i \in \mathcal{C}^V \ \text{(voltage-constrained bus)}, \\[3pt]
\zeta_i, & i \in \mathcal{C}^I \ \text{(current-constrained bus)},
\end{cases}
\label{eq:xi-cases}
\end{equation}
with $\mathcal{C}^{V}$ and $\mathcal{C}^{I}$ denoting voltage- and current-constrained buses, respectively.

Geometrically, $\xi_i$ encodes the direction of admissible 
perturbation on the constraint circle in the complex plane. For unconstrained buses ($\xi_i = 0$), the Wirtinger derivatives remain independent. For voltage-constrained buses ($\xi_i = \kappa_i$), perturbations remain tangential to the constant-voltage space, which enforces $d|V_i| = 0$. Similarly, for current-constrained buses ($\xi_i = \zeta_i$), motion is restricted to the constant-current space, which enforces $d|I_i| = 0$. In all constrained cases, the unit modulus $|\xi_i| = 1$ ensures that the perturbation direction preserves the magnitude constraint. The tangent relation~\eqref{eq:xi-generic1} thus provides a single analytical form applicable to all bus types. When applied to linearized power flow equations, it projects the full Wirtinger Jacobian onto the admissible subspace, which forms the basis for the reduced Jacobian $J_{\mathrm{red}}$ derived in Sec.~\ref{sec:reduced-jacobian}.

\section{Reduced Wirtinger Jacobian and Two-Bus example}
\label{sec:reduced-jacobian}
This section derives the reduced Wirtinger Jacobian by systematically applying the tangent-constraint relations to the full Wirtinger formulation. The resulting reduced matrix explicitly captures the coupling between complex power and current variations under voltage or current magnitude constraints. To illustrate its analytical form and confirm the structural consistency, a two-bus system is used as an illustrative example which provides intuitive verification of the reduction process.

\subsection{Reduced Wirtinger Jacobian}
For any magnitude-limited bus $i$, whether voltage or current constrained, the real power injection is expressed as
\begin{equation}
P_i = \tfrac{1}{2}\big(V_i I_i^{*} + V_i^{*} I_i\big).
\label{eq:power-bus}
\end{equation}

Using the local Th\'evenin relation $V_i = E_i + Z_{ii} I_i$ and treating $I_i$ and $I_i^{*}$ as independent Wirtinger variables, the partial derivatives are
\begin{equation}
\frac{\partial P_i}{\partial I_i}
= \tfrac{1}{2}\big(V_i^{*} + Z_{ii} I_i^{*}\big),
\qquad
\frac{\partial P_i}{\partial I_i^{*}}
= \tfrac{1}{2}\big(V_i + Z_{ii}^{*} I_i\big).
\label{eq:partials-generic}
\end{equation}

Define the auxiliary complex scalar as
\begin{equation}
\alpha_i \triangleq \tfrac{1}{2}\big(V_i + Z_{ii}^{*} I_i\big),
\label{eq:alpha-def}
\end{equation}
Therefore, 
\begin{equation}
\frac{\partial P_i}{\partial I_i} = \alpha_i^{*},
\qquad
\frac{\partial P_i}{\partial I_i^{*}} = \alpha_i.
\label{eq:alpha-partials}
\end{equation}

Applying the tangent relation in~\eqref{eq:xi-generic1}, where
$\xi_i = \kappa_i$ for voltage-limited buses and $\xi_i = \zeta_i$ for current-limited buses,
the total differential becomes:
\begin{equation}
dP_i 
= 
\Big(
\frac{\partial P_i}{\partial I_i}
+ 
\xi_i\, \frac{\partial P_i}{\partial I_i^{*}}
\Big) dI_i
= (\alpha_i^{*} + \xi_i \alpha_i)\, dI_i.
\label{eq:differential-generic}
\end{equation}
The corresponding element of $J_{\mathrm{red}}$ is therefore 
\begin{equation}
J_{\mathrm{red}}^{(i)} = \alpha_i^{*} + \xi_i \alpha_i,
\label{eq:Jred-generic}
\end{equation}
where $\xi_i$ encodes the type of magnitude constraint and its associated tangent rotation.

The complex scalar $\alpha_i$ represents the local coupling between 
bus voltage and current through the Th\'evenin impedance, while $\kappa_i$ 
defines the rotation ensuring constant-voltage operation.  
The product $(\alpha_i^{*} + \kappa_i \alpha_i)$ quantifies the effective 
tangent sensitivity between incremental current and active power. As $|J_{\mathrm{red}}^{(i)}|$ approaches zero, the magnitude-constrained operating locus becomes tangent to the power flow feasibility boundary, thereby defining the local solvability limit.

\subsection{Two-Bus Case}
\label{subsec:two-bus-validation}





\begin{figure}[!t]
\centering
\begin{tikzpicture}[>=latex]
  \draw (0,0) circle (0.4);
  \node at (0,0) {$\sim$};
  \draw (0.4,0) -- (1,0);

  \draw[thick] (1,-0.9) -- (1,0.9);
  \node[above] at (1,0.9) {$E_s \angle 0^\circ$};
  \node[below, align=center] at (1,-0.9) {\textbf{1}};

  \draw (1,0) -- (2,0);
  \draw[fill=white] (2,-0.2) rectangle (3.5,0.2);
  \node[above] at (2.75,0.2) {$Z = R + jX$};
  \draw (3.5,0) -- (4.2,0);

  \draw[thick] (4.2,-0.9) -- (4.2,0.9);
  \node[above] at (4.2,0.9) {$V_2 \angle \theta_2$};
  \node[below, align=center] at (4.2,-0.9) {\textbf{2}};

  \draw (4.2,0) -- (4.6,0);      
  \draw (4.6,-0.4) rectangle (5.4,0.4);

  \draw (4.6,0.4) -- (5.4,-0.4);
  \draw (4.95,0.26) -- (5.25,0.26);
  \draw (4.95,0.16) -- (5.25,0.16);

  \draw (4.75,-0.22)
        .. controls (4.87,-0.02) and (4.98,-0.29) ..
        (5.10,-0.12);

  \node[below] at (5.0,-0.4) {IBR};
\end{tikzpicture}
\caption{Example of a two bus system.}
\label{fig:twobus}
\end{figure}
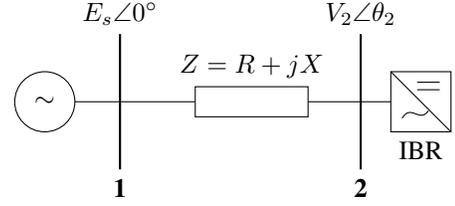
  

The two-bus system in Fig.~\ref{fig:twobus} consists of a slack bus (bus~1) with fixed voltage $|E_s|$ and a 
voltage-constrained bus (bus~2) with fixed $|V_2|$. In bus 2, the voltage magnitude $|V_2|$ is fixed via the generator or the converter control system. This constraint removes one of the radial degrees of freedom of the complex voltage because the bus voltage can only move tangentially on the complex plane, which must stay on the circle.  
Consequently, only the phase angle~$\theta$ can vary, which represents the single independent state variable that governs small perturbations in that bus.  Accordingly, the reduced Jacobian for the two-bus system collapses to a scalar element that relates incremental active power changes to phase angle variations as 

\begin{equation}
J_{\mathrm{red}}^{(2\text{-bus})}
= \alpha_{2}^{*} + \kappa_{2} \alpha_{2},
\label{eq:Jred-2bus}
\end{equation}
where $\alpha_{2}$ is the self-sensitivity coefficient of the reduced network model defined in~\eqref{eq:alpha-def}, and $\kappa_{2}$ is the unit-modulus tangent factor defined in~\eqref{eq:kappa-def} by the voltage constraint. Thus, this direction spans a one-dimensional tangent subspace, in contrast to the two-dimensional space available at unconstrained buses.

The singularity condition
\(
\det(J_{\mathrm{red}}^{(2\text{-bus})}) = 0
\), identifies the local boundary of voltage stability. For comparison, the conventional real-variable Jacobian for a two-bus system with a PV bus contains only the active power sensitivity with respect to the voltage angle 
\begin{equation}
J_{\mathrm{conv}} = 
\left[\frac{\partial P_{2}}{\partial \theta_{2}}\right],
\label{eq:Jconv-PV}
\end{equation}
and the corresponding incremental mismatch relation is 
\begin{equation}
\Delta P_{2} = 
\left(\frac{\partial P_{2}}{\partial \theta_{2}}\right)\Delta \theta_{2}.
\end{equation}
At the $P$–$V$ nose point, $\partial P_{2}/\partial \theta_{2} = 0$, indicating the same singular condition as $J_{\mathrm{red}}^{(2\text{-bus})}=0$. Hence, the reduced Wirtinger Jacobian and the conventional Jacobian share the same singularity condition
\begin{multline}
J_{\mathrm{conv}} = L\,J_{\mathrm{red}}^{(2\text{-bus})}R,
\quad
\det(J_{\mathrm{conv}})=0 \\
\iff 
\det(J_{\mathrm{red}}^{(2\text{-bus})})=0,
\end{multline}
where $L$ and $R$ are non-singular real transformation factors.  
This confirms that both formulations identify the same 
solvability boundary. The general multi-bus equivalence 
is established in Appendix~\ref{app:eqv}, where the Wirtinger formulation is shown to provide a geometrically 
consistent complex-domain representation that inherently respects voltage magnitude 
constraints~\cite{kundur1994stability,VanCutsemVournas1998,
sauer1990jacobian,ajjarapu1992cpf,gao1992modal}.


\section{Multi-Bus Extension and Unified Stability Index}
\label{sec:multi-bus}

Starting from the full Wirtinger Jacobian $\mathbf{J}_{\mathrm{full}}$ in~\eqref{eq:fulljac}, 
the tangent-constrained differential relation
\begin{equation}
d\mathbf{I}^{*} = \boldsymbol{\Xi}\, d\mathbf{I},
\qquad
\boldsymbol{\Xi} = \mathrm{diag}(\xi_1,\xi_2,\ldots,\xi_{n_u+n_c}),
\label{eq:xi-generic}
\end{equation}
restricts admissible complex perturbations at each constrained bus to a single complex degree of freedom.  
This relation may be expressed in current coordinates ($d\mathbf{I}$) or, equivalently, in voltage coordinates ($d\mathbf{V}$), since both representations are linked by the network relation in equation~\eqref{eq:thevenin}.  
For unconstrained buses ($i \in \mathcal{U}$), $dI_i$ and $dI_i^{*}$ (or $dV_i$ and $dV_i^{*}$) remain independent, corresponding to $\xi_i = 0$.  
For constrained buses ($j \in \mathcal{C}$), $\xi_j$ has a unit modulus and defines the local tangent rotation that enforces the constant-magnitude constraint, which is either constant-$|V|$ or constant-$|I|$. The total differential of the full complex power mismatch function is
\begin{equation}
d\mathbf{F}_{\mathrm{full}}
=
\frac{\partial \mathbf{F}_{\mathrm{full}}}{\partial \mathbf{I}}\, d\mathbf{I}
+
\frac{\partial \mathbf{F}_{\mathrm{full}}}{\partial \mathbf{I}^{*}}\, d\mathbf{I}^{*}.
\label{eq:dFfull}
\end{equation}
Substituting~\eqref{eq:xi-generic} gives the reduced differential mapping
\begin{equation}
d\mathbf{F}_{\mathrm{red}}
=
\Big(
\frac{\partial \mathbf{F}_{\mathrm{full}}}{\partial \mathbf{I}}
+
\frac{\partial \mathbf{F}_{\mathrm{full}}}{\partial \mathbf{I}^{*}}\boldsymbol{\Xi}
\Big)d\mathbf{I}
\;\triangleq\;
\mathbf{J}_{\mathrm{red}}\, d\mathbf{I},
\label{eq:Jred-def}
\end{equation}
where $\mathbf{J}_{\mathrm{red}}$ is the reduced Wirtinger Jacobian.

Equation~\eqref{eq:Jred-def} shows that for each constrained bus~$i$, the two Wirtinger columns associated with $I_i$ and $I_i^{*}$ are projected on a single tangent direction according to the following
\begin{equation}
\frac{\partial (\cdot)}{\partial I_i}
\;\longrightarrow\;
\frac{\partial (\cdot)}{\partial I_i}
+
\xi_i\, \frac{\partial (\cdot)}{\partial I_i^{*}}.
\label{eq:column-merge}
\end{equation}
This column-merging operation projects the sensitivity matrix onto the feasible tangent subspace defined by the constraint $dI_i^{*} = \xi_i\, dI_i$, which reduces the column count without affecting the number of rows.  

When the voltage or current magnitude at a constrained bus is fixed, the tangent relation enforces a conjugate symmetry between the differential power components.  
As a result, one of the two conjugate equations in the full Wirtinger Jacobian becomes algebraically dependent on its counterpart.  
This dependence means that one of the two rows corresponding to each constrained bus becomes redundant and can therefore be eliminated.  
Hence, the elimination of the rows (one per constrained bus) logically follows from the algebraic dependence created by the tangent constraint.

After merging columns and eliminating conjugate dependent rows, the reduced Jacobian $\mathbf{J}_{\mathrm{red}}$ retains only independent sensitivities while remaining physically consistent. In the full Wirtinger Jacobian $\mathbf{J}_{\mathrm{full}}$, each bus contributes two rows and two columns (for $I_i$ and $I_i^{*}$). Under tangent reduction, the unconstrained buses ($\xi_i=0$) keep both rows and both columns, while the constrained buses ($|\xi_i|=1$) have their two columns combined via~\eqref{eq:column-merge} and a redundant conjugate row removed. Consequently, $\mathbf{J}_{\mathrm{red}}$ has dimension $(2n_u+n_c)\times(2n_u+n_c)$, which comprises two degrees of freedom per unconstrained bus and one per constrained bus, exactly matching the number of independent complex degrees of freedom. Each unit modulus factor $\xi_i$ specifies the local rotation associated with the constant magnitude constraint, and $\mathbf{J}_{\mathrm{red}}$ provides the dimensionally exact mapping of the feasible tangent subspace that supports the subsequent row dominance test and the unified solvability index.
 
Using $\mathbf{V} = \mathbf{E} + \mathbf{Z}\mathbf{I}$, the Wirtinger partial derivatives depend on the bus type. For an unconstrained bus, the complex power injection is $S_i = V_i I_i^{*}$, with derivatives:
\begin{equation}
\left\{
\begin{aligned}
\frac{\partial S_i}{\partial I_i}      &= \Diag{I_i^{*}}\, Z_{ii}, &
\quad \frac{\partial S_i^{*}}{\partial I_i}      &= \Diag{V_i^{*}}, \\[4pt]
\frac{\partial S_i}{\partial I_i^{*}}  &= \Diag{V_i},              &
\quad \frac{\partial S_i^{*}}{\partial I_i^{*}} &= \Diag{I_i}\, Z_{ii}^{*}, \\[4pt]
\frac{\partial S_i}{\partial I_j}      &= \Diag{I_i^{*}}\, Z_{ij}, &
\quad \frac{\partial S_i^{*}}{\partial I_j}     &= 0, \quad j\neq i \\[4pt]
\frac{\partial S_i}{\partial I_j^{*}}  &= 0,                       &
\quad \frac{\partial S_i^{*}}{\partial I_j^{*}} &= \Diag{I_i}\, Z_{ij}^{*}, \quad j\neq i
\end{aligned}
\right.
\label{eq:general-pq-block}
\end{equation}

where $\mathrm{diag}(\cdot)$ forms a diagonal matrix from its argument.

For buses constrained by voltage and current, the real power injection is $P_i = \frac{1}{2}(V_i I_i^{*} + V_i^{*} I_i)$, with derivatives:
\begin{equation}
\left\{
\begin{aligned}
\frac{\partial P_i}{\partial I_j}     &= \tfrac{1}{2}\,\Diag{I_i^{*}}\, Z_{ij}, \quad j\neq i \\[2pt]
\frac{\partial P_i}{\partial I_j^{*}} &= \tfrac{1}{2}\,\Diag{I_i}\, Z_{ij}^{*}, \quad j\neq i \\[2pt]
\frac{\partial P_i}{\partial I_i} &= \tfrac{1}{2}\!\Big(\Diag{I_i^{*}}\, Z_{ii} + \Diag{V_i^{*}}\Big), 
\\[2pt]
\frac{\partial P_i}{\partial I_i^{*}} &= \tfrac{1}{2}\!\Big(\Diag{V_i} + \Diag{I_i}\, Z_{ii}^{*}\Big).
\end{aligned}
\right.
\label{eq:general-pv-groups}
\end{equation}

The reduced Wirtinger Jacobian $\mathbf{J}_{\mathrm{red}}$ 
for the general multi-bus case takes the block form

\begin{equation}
\mathbf{J}_{\mathrm{red}} =
\begin{bmatrix}
\underbrace{\operatorname{diag}(I_i^{*}) Z_{ii}}_{\partial S_i/\partial I_i} & 
\underbrace{\operatorname{diag}(V_i)}_{\partial S_i/\partial I_i^{*}} & 
\underbrace{\operatorname{diag}(I_i^{*}) Z_{ij}}_{\partial S_i/\partial I_j} \\
\underbrace{\operatorname{diag}(V_i^{*})}_{\partial S_i^{*}/\partial I_i} & 
\underbrace{\operatorname{diag}(I_i) Z_{ii}^{*}}_{\partial S_i^{*}/\partial I_i^{*}} & 
\underbrace{\operatorname{diag}(I_i) Z_{ij}^{*} \Xi}_{\partial S_i^{*}/\partial I_j} \\
\underbrace{\tfrac{1}{2} \operatorname{diag}(I_j^{*}) Z_{ij}}_{\partial P_j/\partial I_i} & 
\underbrace{\tfrac{1}{2} \operatorname{diag}(I_j) Z_{ij}^{*}}_{\partial P_j/\partial I_i^{*}} & 
\underbrace{\operatorname{diag}(\alpha_j^{*} + \alpha_j \Xi)}_{\partial P_j/\partial I_j}
\end{bmatrix}
\label{eq:Jred-general}
\end{equation}

By the Lévy--Desplanques theorem~\cite{HornJohnson2013}, 
$\mathbf{J}_{\mathrm{red}}$ is nonsingular if it is strictly row diagonally dominant:
\begin{equation}
|[J]_{ii}| > \sum_{j\neq i}|[J]_{ij}|, \qquad \forall i,
\label{eq:LD-condition}
\end{equation}

Applying this to the reduced Wirtinger Jacobian yields simplified dominance conditions:

\subsubsection{Unconstrained Bus Dominance}
For unconstrained buses, the diagonal dominance condition reduces to
\begin{equation}
|V_i| \;>\; |I_i|\sum_{j \neq i} |Z_{ij}|.
\label{eq:unconstrained-dominance}
\end{equation}

\subsubsection{Constrained Bus Dominance}
For constrained buses, the dominance condition is given as
\begin{equation}
|\alpha_i^{*} + \xi_i \alpha_i| > |I_i|\sum_{j \neq i} |Z_{ij}|,
\label{eq:constrained-dominance}
\end{equation}
where $\alpha_i$ is the local self-sensitivity 
coefficient defined in~\eqref{eq:alpha-def}, and $\xi_i$ is the tangent factor defined 
in~\eqref{eq:xi-generic1}.

\subsection{Unified Solvability Index}

The unified solvability index for bus $i$ is defined as

\begin{equation}
C_{W,i}
\triangleq
\frac{\chi^{\mathrm{U}}_i |V_i| + \chi^{\mathrm{C}}_i |\alpha_i^{*} + \xi_i \alpha_i|}
     {|I_i|\sum_{j \neq i} |Z_{ij}|},
\label{eq:CW-index}
\end{equation}

where $\chi^{\mathrm{U}}_i = 1$ if $i \in \mathcal{U}$ and zero 
otherwise, and $\chi^{\mathrm{C}}_i = 1$ if $i \in \mathcal{C}$ 
and zero otherwise.

If $C_{W,i} > 1$ for all $i \in \mathcal{N}$, then $J_{\mathrm{red}}$ is strictly diagonally dominant and hence nonsingular~\cite{HornJohnson2013}, which ensures a unique local power-flow solution. The system-level index is then defined as
\begin{equation}
C_W = \min_{i} {C_{W,i}}
\label{eq:CW-system}
\end{equation}
which serves as a unified solvability certificate for the entire network. As $C_W$ reaches unity, the system approaches the power-flow solvability boundary.

Each tangent factor $\xi_i$ is a unit-modulus rotation that confines admissible perturbations to the magnitude-constrained operating locus in the complex domain. For unconstrained buses ($i \in \mathcal{U}$), no magnitude constraint is imposed and the current differential $dI_i$ remains free. For constrained buses, $\xi_i$ enforces the appropriate geometric tangent relation (constant-$|V|$ or constant-$|I|$).

The numerator of $C_{W,i}$ captures the local bus strength, where $|V_i|$ for unconstrained buses reflects voltage stiffness, while $|\alpha_i^* + \xi_i\alpha_i|$ for constrained buses 
reflects the active-power sensitivity along the magnitude-constrained operating locus. The denominator $|I_i|\sum_{j\neq i}|Z_{ij}|$ quantifies the aggregate influence of neighboring buses through 
off-diagonal impedance coupling. 

Evaluating $C_W$ requires only the network impedance matrix $\mathbf{Z}$, steady-state phasors $(\mathbf{V}, \mathbf{I})$, and the tangent factors $\xi_i$. The bus-type indicators 
$\chi^{\mathrm{U}}_i$ and $\chi^{\mathrm{C}}_i$ are 
determined directly from the converter operating mode. No real-imaginary decomposition is required, and the index is evaluated directly from network phasors without reformulating the power flow Jacobian. The equivalence between 
$J_{\mathrm{red}}$ and the conventional power flow Jacobian is established in Appendix~\ref{app:eqv}.



\section{Case Studies}
\label{sec:case_study}
The proposed index is evaluated on three test systems of increasing size and complexity: a simple three-bus system, the 9-Bus Hami Region Network, and a modified IEEE 39-bus transmission network.

\subsection{Example of Three-Bus System}
\label{sec:case_study_3-Bus}
A three-bus radial system is used to illustrate the structural behavior of $J_{\mathrm{red}}$ under increasing load. The system parameters are $Z_{12} = 0.080 + j0.400$~p.u., $Z_{23} = 0.100 + j0.500$~p.u., with load profile $S_2 = \lambda(2.00 + j1.00)$~p.u. and a PV bus at Bus~3 ($P_3 = 0.80$~p.u., $V_3 = 1.00$~p.u.).

Fig.~\ref{fig:wirtinger_evolution} shows the evolution of $J_{\mathrm{red}}$ across loading levels. Under light load, the matrix exhibits strong diagonal dominance ($m_{\min} = 0.8431$), as shown in Fig.~\ref{fig:wirtinger_evolution}(a). Near the stability boundary, off-diagonal coupling intensifies, and the diagonal dominance margin drops to $m_{\min} = -0.0326$, as shown in Fig.~\ref{fig:wirtinger_evolution}(b). Diagonal dominance is lost at $\lambda = 0.5974$, which is 0.59\% before the conventional Jacobian singularity at $\lambda = 0.6009$. This 
coincides with the point at which the off-diagonal coupling $\sum_{j \neq i}|Z_{ij}|$ exceeds the local self-impedance $|Z_{ii}|$ at Bus~2, which reflects the saturation of the reactive support at the PV-regulated Bus~3 as the load increases.

\begin{figure}[!t]
\centering
\begin{subfigure}{0.4\textwidth}
    \centering
    \includegraphics[width=\linewidth]{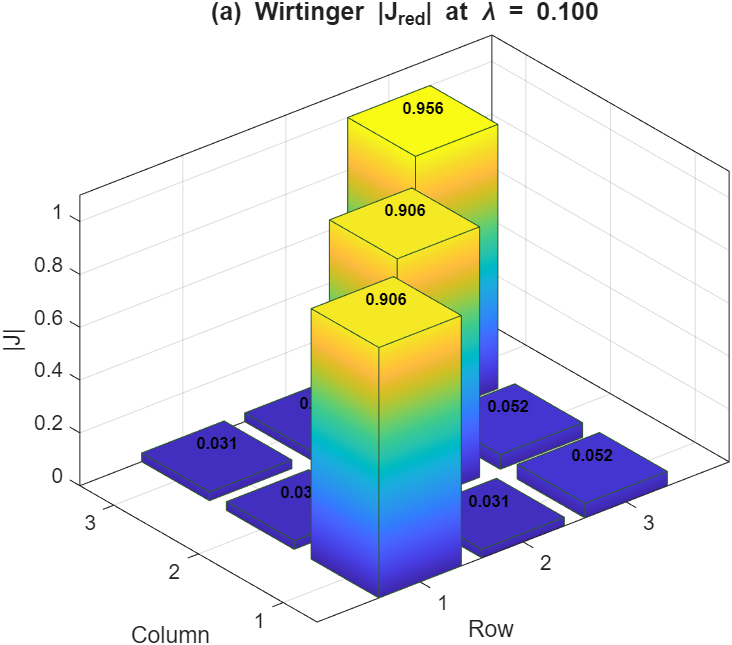}
\end{subfigure}
\hfill
\vspace{0.5cm} 
\begin{subfigure}{0.4\textwidth}
    \centering
    \includegraphics[width=\linewidth]{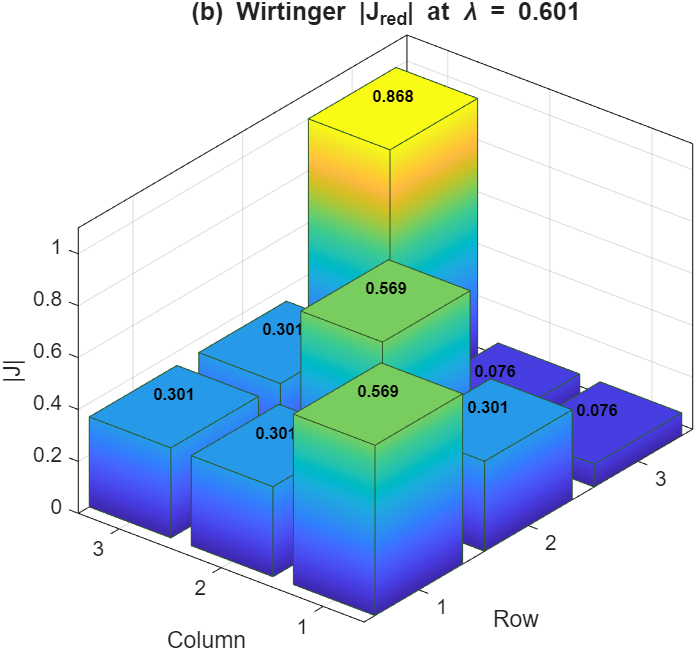}
\end{subfigure}
\caption{Evolution of the proposed Wirtinger Jacobian matrix across loading conditions}
\label{fig:wirtinger_evolution}
\end{figure}

Table~\ref{tab:stability_boundaries} compares the proposed index with established methods. The Wirtinger approach identifies the stability boundary between the conventional Jacobian and the more conservative L-index, which offers a meaningful safety margin without excessive conservatism. In real-time operation, this early warning capability provides an additional security buffer before the true voltage-collapse point.

\begin{table}[h]
\centering
\caption{Stability Boundary Comparison for the Three-Bus System}
\label{tab:stability_boundaries}
\begin{tabular}{lcc}
\hline
Method & Stability Boundary ($\lambda$) & Safety Margin (\%)\\
\hline
Proposed $C_W$ Index & 0.5974 & 0.59 \\
Conventional Jacobian       & 0.6009 & --   \\
L-index ($L \approx 0.8$)         & 0.5898 & 1.85 \\
\hline
\end{tabular}
\end{table}

\subsection{9-Bus IBR-Rich Test System}
\label{sec:9_test_system}

The proposed index is validated through power-flow simulations in MATPOWER~\cite{matpower-manual} on a representative network of the Hami region in China~\cite{xie2024oscillatory, Liu2017_SSO_Wind}, which includes both chain and 
radial configurations as shown in Fig.~\ref{fig:9bus}. The base power and base AC voltage are set to 
1000~MVA and 220~kV, respectively. Buses~1--6 are designated as PCCs interfaced with IBRs. The IBRs at 
PCCs~1, 2, 3, and~5 operate under PQ control with fixed reactive power 
injections, and are modeled as PQ buses whose voltage magnitudes vary 
with the power-flow solution. The IBRs at PCCs~4 and~6 operate under PV 
control, which regulates terminal voltages at fixed values. Buses~7 and~8 are 
interconnection nodes, and bus~9 is modeled as the slack bus.

\begin{figure}[t]
\centering
\includegraphics[width=0.85\linewidth]{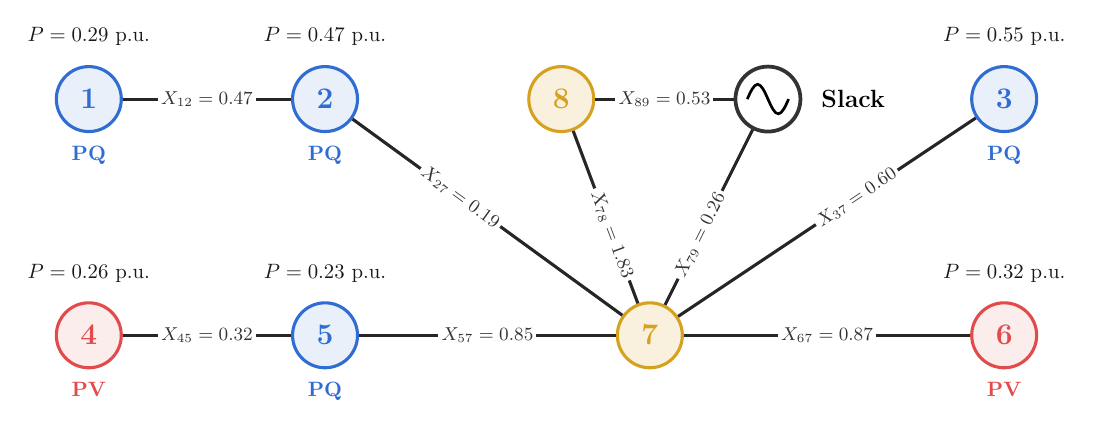}
\caption{
The 9-bus system}
\label{fig:9bus}
\end{figure}

\begin{figure}[t]
  \centering
  \includegraphics[width=\columnwidth]{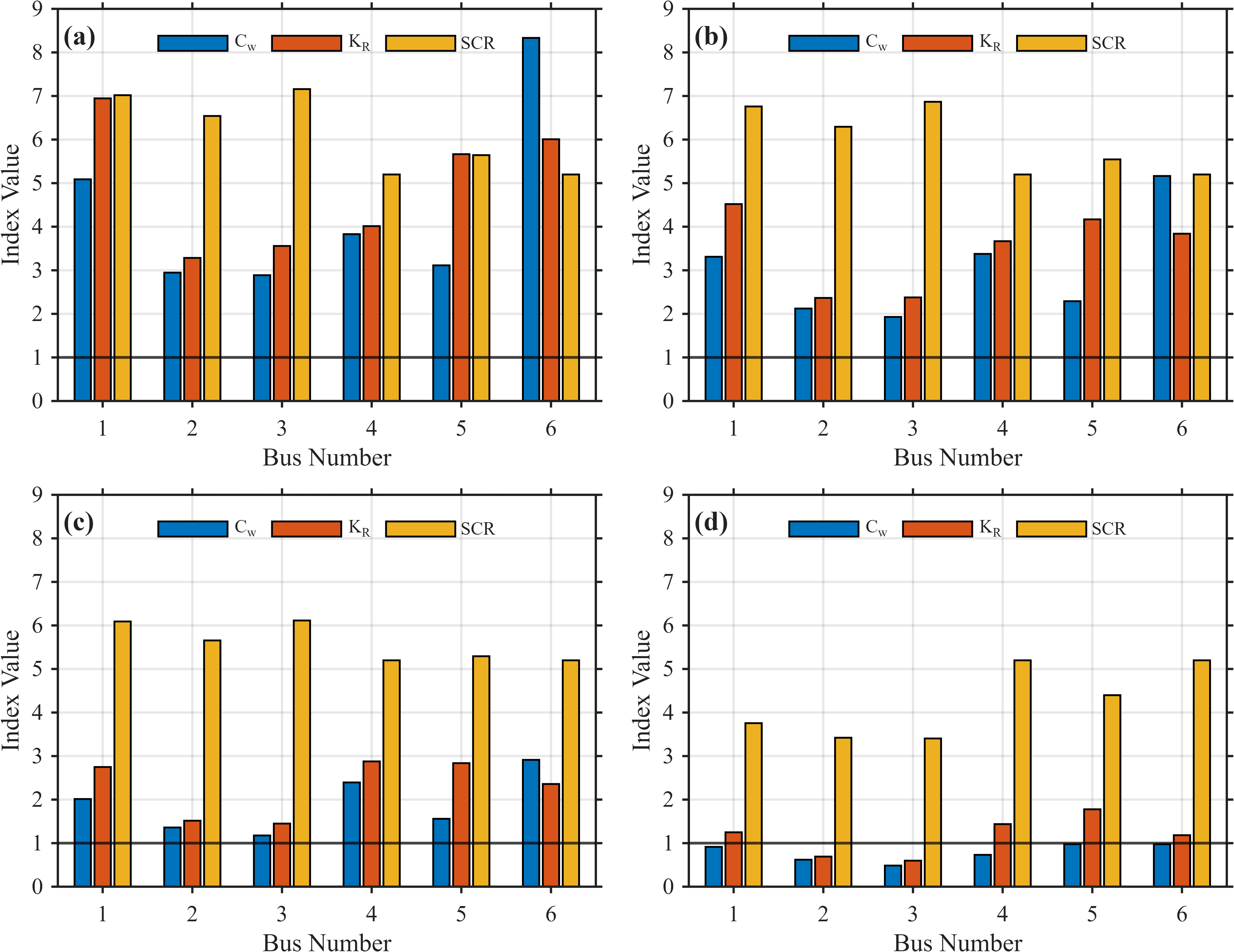}
  \caption{Comparison of the \(C_W\), \(K_{R}\)~\cite{Wang2024_Impedance}, and SCR indices across all buses at the critical operating point for the 9-bus system show loading levels: (a) 10\%, (b) 40\%, (c) 70\%, and (d) 100\%.}
  \label{fig:ninebus_indices}
\end{figure}

Fig.~\ref{fig:ninebus_indices} shows the evolution of all indices across 
loading levels. Under light loading, both $C_W$ and $K_{R}$ report 
large stability margins, as expected when the current injection is minimal 
and the system operates well within its voltage limits. As loading 
increases toward collapse, $C_W$ tends to decrease earlier than 
$K_{R}$, particularly at buses with strong inter-bus coupling. 
Specifically, $C_W$ decreases as the off-diagonal impedance coupling 
(row sums $\sum_j |Z_{ij}|$) grows relative to the diagonal terms, 
since this directly erodes the row-diagonal dominance of 
$J_{\mathrm{red}}$. In contrast, $K_{R}$ captures primarily local 
grid strength and is less sensitive to inter-bus coupling effects. 
Consequently, $C_W$ provides a more discriminative early warning under 
coupled network conditions.

Table~\ref{tab:Per-bus indices 9 bus} compares the per-bus indices at 
the critical loading point $P^\ast = 1.0$. Bus~3 exhibits the lowest 
$C_W$ value of (0.4853), which indicates the weakest voltage stability margin. This is consistent with its position in the chain configuration, where cumulative impedance along the feeder reduces the effective Thevenin voltage support relative to buses closer to the slack. The high $L$-index at Bus~3 ($0.5336$) corroborates this assessment, while the relatively low SCR ($3.4039$) confirms weak local grid strength at this terminal.

\begin{table}[h]
\centering
\caption{Per-bus voltage stability indices at critical loading 
         ($P^\ast = 1.0$).}
\label{tab:Per-bus indices 9 bus}
\begin{tabular}{c c c c c c}
\toprule
Bus & Type & $C_W$ & $K_{R}$ & SCR & $L$-index \\
\midrule
1 & $PQ$ & 0.9148 & 1.2489 & 3.7559 & 0.4165 \\
2 & $PQ$ & 0.6202 & 0.6911 & 3.4210 & 0.2277 \\
3 & $PQ$ & 0.4853 & 0.5981 & 3.4039 & 0.5336 \\
4 & $PV$ & 0.7316 & 1.4373 & 5.1987 & --     \\
5 & $PQ$ & 0.9767 & 1.7774 & 4.3982 & 0.0699 \\
6 & $PV$ & 0.9761 & 1.1826 & 5.1987 & --     \\
\bottomrule
\end{tabular}
\end{table}

\subsection{Modified IEEE 39-Bus System}
\label{sec:test_system}

\begin{figure}[t]
\centering
\includegraphics[width=0.9\linewidth]{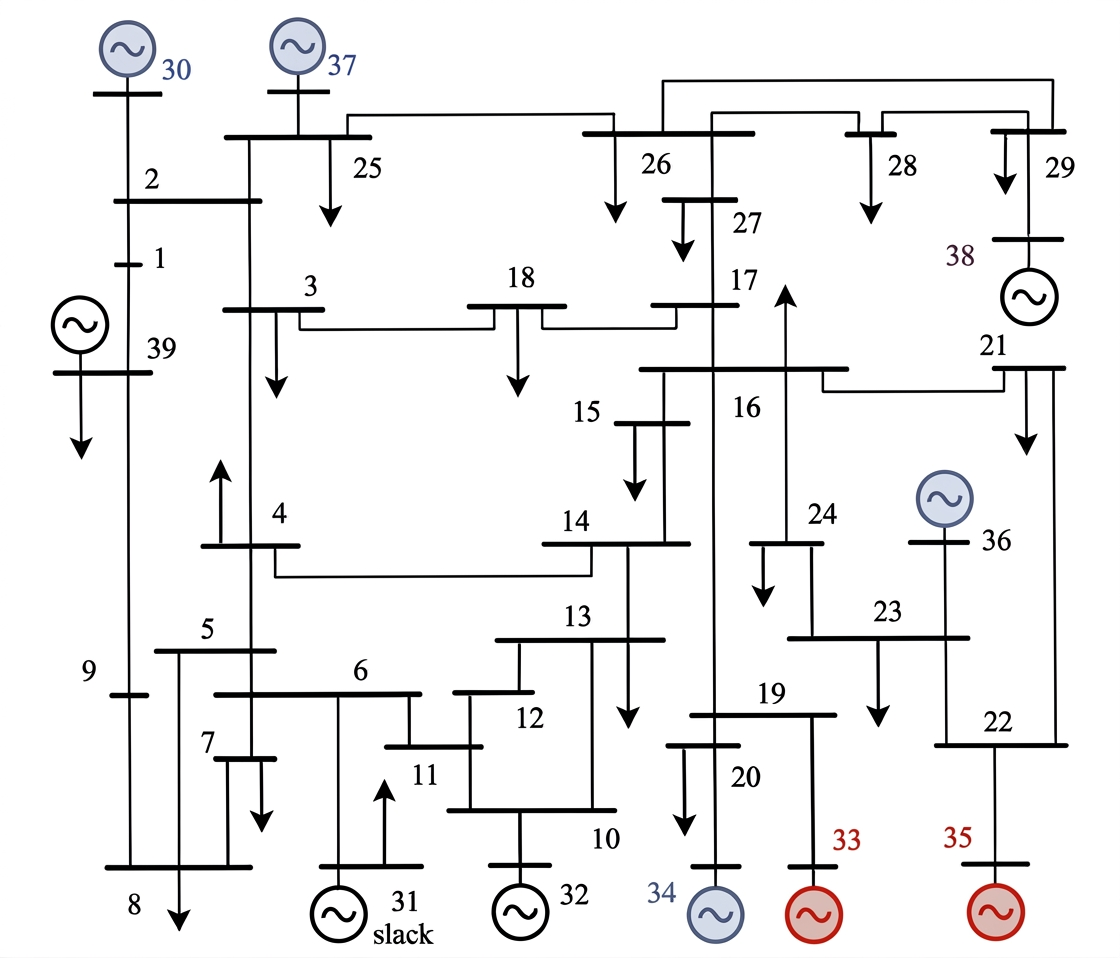}
\caption{
The IEEE 39-bus system}
\label{fig:39bus}
\end{figure}

The proposed index is evaluated on the IEEE 39-bus (New England) 
transmission system~\cite{Athay1979} shown in Fig.~\ref{fig:39bus}, with a base power of 
100~MVA and Bus~31 as the slack bus. Six  generator buses are replaced by IBR terminals:
\(
\mathcal{IBR} = \{30, 33, 34, 35, 36, 37\}.
\)
Active and reactive loads, along with generator real-power injections, are uniformly scaled to 30\% of their nominal values before any penetration sweep, which produces a well-conditioned base case in which subsequent loss of solvability is driven by increasing IBR injections.

The six IBR buses are partitioned into unconstrained and constrained 
sets:
\(
\mathcal{U} = \{30, 34, 36, 37\}, \quad
\mathcal{C} = \{33, 35\}.
\)
Buses in $\mathcal{U}$ are modeled as negative load injections representing grid-following converters with fixed active and reactive power output (marked in 
blue in Fig.~\ref{fig:39bus})~\cite{Rocabert2012}. Buses in $\mathcal{C}$ are voltage-controlled units representing grid-forming converters that regulate the terminal voltage~\cite{Tayyebi2020,NERC2021},
while the converter current remains below $I_{i,\max}$ (marked in red in Fig.~\ref{fig:39bus}). When $|I_i|$ approaches this limit, the converter saturates and transitions from voltage regulation to current-limited operation, wherein the current magnitude is held constant and its phase adjusts to maintain power 
balance. This transition is captured by the Wirtinger-based tangent factors discussed in Sec.~\ref{sec:tangent-framework}.

IBR injections are varied via a scalar $\lambda \in \{0.2, 0.4, 0.6, 0.8, 1.0\}$ that uniformly scales the active-power setpoints of all six units. At each converged operating point, $C_{W,i}$ is evaluated at all IBR buses, and the system margin is defined as $\min_i C_{W,i}$. For comparison, the $L$-index is computed at all 
load buses, SCR is obtained from Thévenin impedance ($Z_{\mathrm{th}}$) and local active power~\cite{Wu2018_SDSCR}, and the multi-infeed $K_R$ index is calculated from the maximum eigenvalue of the network impedance matrix~\cite{Wang2024_Impedance}.


\begin{figure}[t]
    \centering
    \includegraphics[width=\columnwidth]{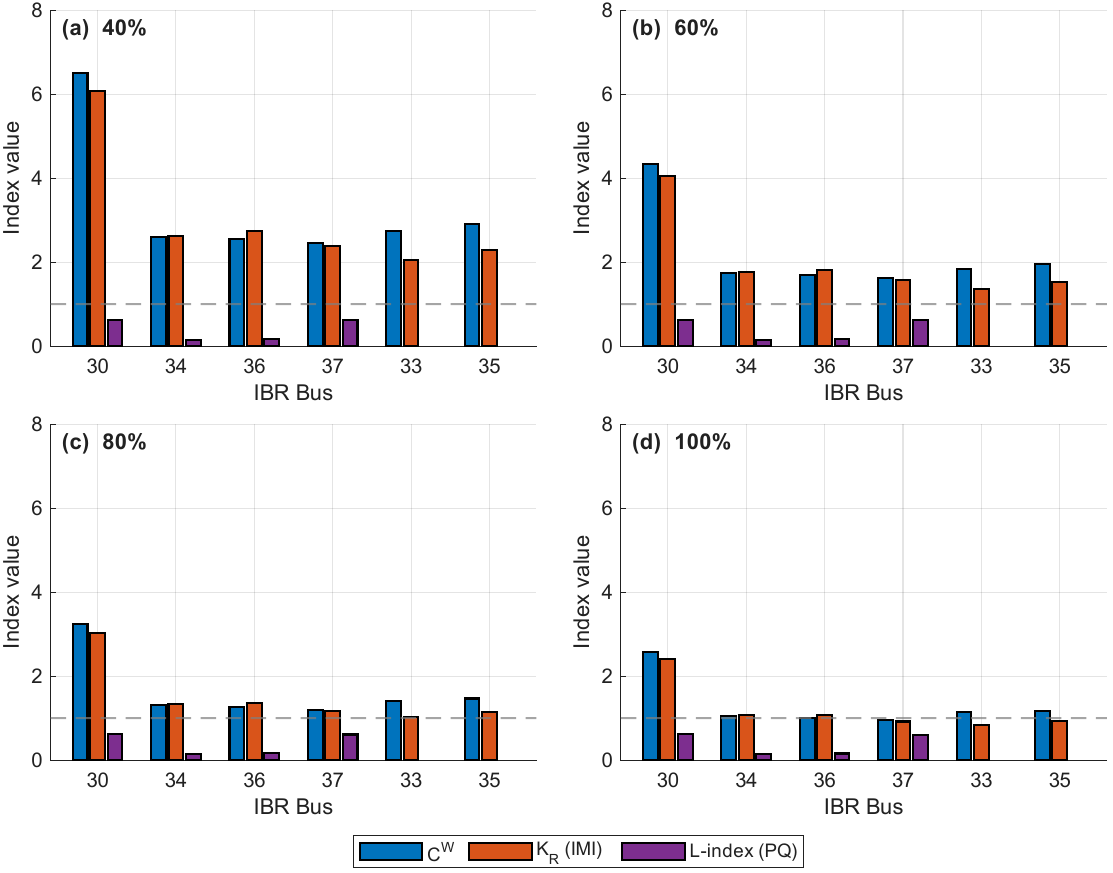}
    \caption{Comparative evolution of the unified index $C_W$, multi-infeed $K_R$ index, and $L$-index at the six IBR terminals for four loading levels.}
    \label{fig:barplots}
\end{figure}

\begin{figure}[t]
  \centering
  \includegraphics[width=\columnwidth]{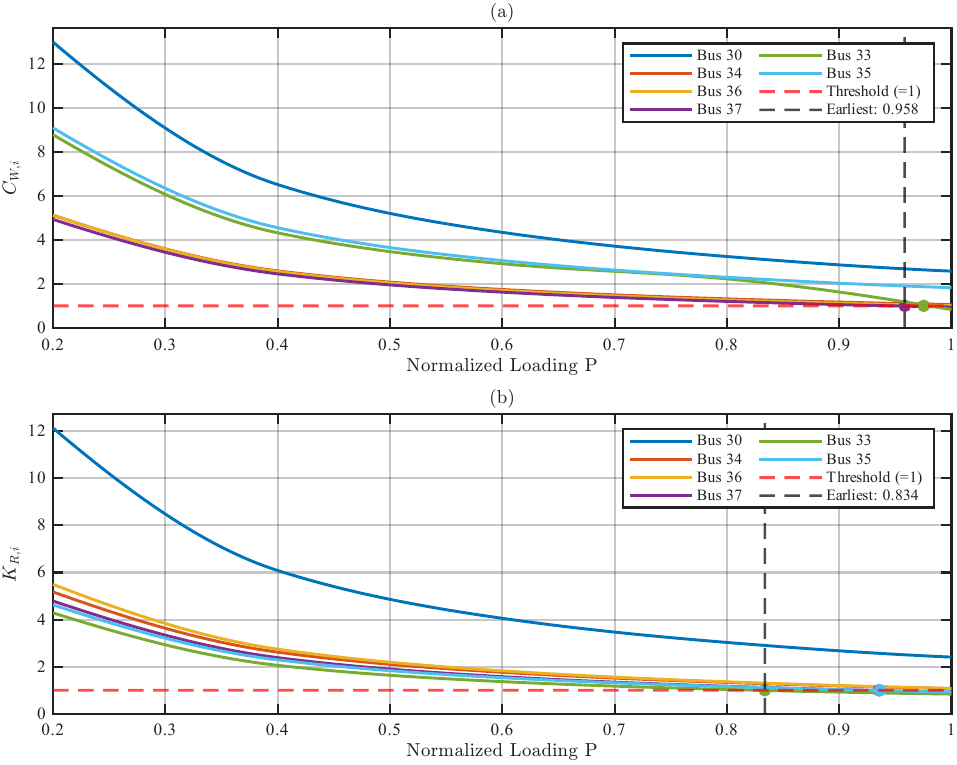}
  \caption{ (a) $C_{W}$ and (b) $K_{R}$ indices at all IBRs across loading sweep, with threshold crossings marked at $C_W = 1.0$ and $K_R = 1.0$}
  \label{fig:39bus_indices}
\end{figure}

Fig.~\ref{fig:barplots} summarizes the evolution of $C_W$, $K_R$, and the $L$-index across the six IBR terminals for four penetration levels. Both $C_W$ and $K_R$ decrease monotonically as the converter injections increase, which  reflects progressive depletion of the solvability margin. Up to $\lambda = 0.8$, bus~37 is the weakest location due to its large Thevenin impedance, which reaches $C_W = 1.20$ at $\lambda = 0.8$. 

The full per-bus results across all penetration cases are reported in Table~\ref{tab:cw_kr_l_scrop_all}. At $\lambda = 0.2$, $C_W$ ranges from 4.93 at bus~37 to 12.99 at bus~30, with all buses well within the solvability region. The spread across buses reflects differences in the Thevenin impedance seen at each PCC, with bus~30 exhibiting the strongest margin and bus~37 the weakest among the 
unconstrained terminals.

For moderate penetration ($\lambda \leq 0.8$), all operating points remain clearly stable with $C_W$ well above unity and no converter reaching its current limit. Near full penetration, $C_W$ reaches its critical value of 1.0 at $\lambda \simeq 0.96$, while $K_R = 1.0$ is reached earlier at $\lambda \simeq 0.8$, as shown in Fig.~\ref{fig:39bus_indices}.

The earlier threshold crossing of $K_R$ at 
$\lambda \simeq 0.80$ compared to $C_W$ at 
$\lambda \simeq 0.96$ reflects the fundamental difference between the two indices: $K_R$ applies a uniform threshold based on the maximum eigenvalue of the network impedance matrix~\cite{Wang2024_Impedance}, making it more conservative under partial network stress, whereas $C_W$ tracks the row-diagonal dominance of $J_{\mathrm{red}}$ at each bus independently. This makes $C_W$ less conservative and more discriminative at the bus level, particularly when only a subset of terminals approaches the solvability boundary.

At $\lambda = 1.0$, the converter at bus~33 reaches its current limit and transitions to current-limited operation. This causes $C_W$ to 
drop abruptly to 0.83, which shifts the most critical terminal from bus~37 to bus~33, as shown in Table~\ref{tab:Minimum_full_penetration}. In contrast, the $L$-index remains above 0.60 throughout, which confirms its insensitivity to instability driven by converter saturation rather than reactive-power deficiency. Furthermore, the SCR on bus~37 decreases from 22.6 to 4.5 as the injected power increases, which reflects a loss of relative grid strength due to growth in local generation rather than changes in network impedance; SCR alone cannot detect this distinction.

At bus~35, $C_W = 1.825$ while $K_R = 0.927$ at $\lambda = 1.0$, which represents the largest divergence between the two indices across all buses and penetration levels. Bus~35 operates under voltage-regulated ($\mathcal{C}^V$) mode throughout, and its converter has not yet reached its current limit. The $K_R$ index drops below unity at this bus due to the growing off-diagonal impedance coupling in the network, whereas $C_W$ correctly 
identifies bus~35 as still solvent since its local diagonal dominance condition remains satisfied. This contrast illustrates the bus-resolved advantage of $C_W$ over eigenvalue-based indices under mixed converter operating conditions.

\begin{table}[t]
\centering
\caption{Minimum indices at full IBR penetration level ($\lambda = 1.0$)}
\label{tab:Minimum_full_penetration}
\begin{tabular}{c c c c c c}
\hline
Bus & Type & $C_W$ & $K_R$ & $L$-index & $\mathrm{SCR}$ \\ \midrule
30 & $\mathcal{U}$   & 2.576 & 2.407 & 0.613 & 10.89 \\
34 & $\mathcal{U}$   & 1.060 & 1.072 & 0.147 & 4.55 \\
36 & $\mathcal{U}$   & 1.006 & 1.079 & 0.158 & 3.79 \\
37 & $\mathcal{U}$   & 0.947 & 0.920 & 0.607 & 4.52 \\
33 & $\mathcal{C}^{I}$ & \textbf{0.827} & 0.837 & -- & 4.22 \\
35 & $\mathcal{C}^{V}$ & 1.825 & 0.927 & -- & 3.81 \\ 
\bottomrule
\end{tabular}
\end{table} 

Table~\ref{tab:Minimum_full_penetration} confirms that the proposed index $C_W$ identifies bus~33 as the most critical terminal at $\lambda = 1.0$, with $C_W = 0.827$, while bus~37 also drops below unity ($C_W = 0.947$). Both $C_W$ and $K_R$ track the Jacobian conditioning deterioration more distinctly than the $L$-index or SCR, with $C_W$ providing the more precise bus-level resolution under mixed converter operating conditions.

\begin{table*}[t]
\centering
\caption{Unified $C_W$, multi-infeed $K_R$, $L$-index, and operating $\mathrm{SCR}$ for each IBR bus across all loading cases}
\label{tab:cw_kr_l_scrop_all}
\setlength{\tabcolsep}{2pt}
\scriptsize
\begin{tabular}{c|cccc|cccc|cccc|cccc|cccc}
\toprule
\multirow{2}{*}{Bus} &
\multicolumn{4}{c|}{Case 1 ($\lambda = 0.2$)} &
\multicolumn{4}{c|}{Case 2 ($\lambda = 0.4$)} &
\multicolumn{4}{c|}{Case 3 ($\lambda = 0.6$)} &
\multicolumn{4}{c|}{Case 4 ($\lambda = 0.8$)} &
\multicolumn{4}{c}{Case 5 ($\lambda = 1.0$)} \\
\cmidrule(lr){2-21}
 & $C_W$ & $K_R$ & $L$ & $\mathrm{SCR}$ &
   $C_W$ & $K_R$ & $L$ & $\mathrm{SCR}$ &
   $C_W$ & $K_R$ & $L$ & $\mathrm{SCR}$ &
   $C_W$ & $K_R$ & $L$ & $\mathrm{SCR}$ &
   $C_W$ & $K_R$ & $L$ & $\mathrm{SCR}$ \\
\midrule
30 ($\mathcal{U}$) &
12.99 & 12.13 & 0.613 & 54.44 &
6.51  &  6.08 & 0.614 & 27.22 &
4.34  &  4.05 & 0.616 & 18.15 &
3.24  &  3.03 & 0.619 & 13.61 &
2.58  &  2.41 & 0.622 & 10.89 \\

33 ($\mathcal{C}^{I}$) &
8.78 & 4.28 & -- & 21.09 &
4.32 & 2.06 & -- & 10.54 &
2.91 & 1.37 & -- & 7.03 &
2.22 & 1.03 & -- & 5.27 &
\textbf{0.83} & 0.84 & -- & 4.22 \\

34 ($\mathcal{U}$) &
5.12 & 5.17 & 0.147 & 22.74 &
2.59 & 2.62 & 0.145 & 11.37 &
1.74 & 1.76 & 0.144 & 7.58 &
1.32 & 1.33 & 0.143 & 5.69 &
1.06 & 1.07 & 0.142 & 4.55 \\

35 ($\mathcal{C}^{V}$) &
9.09 & 4.62 & -- & 19.03 &
4.55 & 2.29 & -- & 9.52 &
3.06 & 1.53 & -- & 6.34 &
2.30 & 1.16 & -- & 4.76 &
1.83 & 0.93 & -- & 3.81 \\

36 ($\mathcal{U}$) &
5.13 & 5.49 & 0.158 & 18.97 &
2.56 & 2.74 & 0.160 & 9.49 &
1.70 & 1.82 & 0.163 & 6.32 &
1.27 & 1.36 & 0.166 & 4.74 &
1.01 & 1.08 & 0.170 & 3.79 \\

37 ($\mathcal{U}$) &
4.93 & 4.79 & 0.607 & 22.61 &
2.45 & 2.38 & 0.610 & 11.31 &
1.62 & 1.58 & 0.614 & 7.54 &
1.20 & 1.17 & 0.619 & 5.65 &
0.95 & 0.92 & 0.624 & 4.52 \\

\bottomrule
\end{tabular}
\end{table*}

\section{Conclusion}
\label{sec:Conclusion}

A unified Wirtinger-based solvability index $C_W$ was developed 
for static voltage-stability assessment in power systems with high IBR penetration. The reduced Wirtinger Jacobian $J_{\mathrm{red}}$ accommodates unconstrained, voltage-regulated, 
and current-limited converter regimes through tangent-subspace constraint embedding, and its singularity set was proven to coincide exactly with that of the conventional power-flow Jacobian.

Case studies on benchmark systems demonstrate that the proposed $C_W$ index decreases monotonically with increasing converter penetration and accurately pinpoints the onset of current-limited operation at individual PCCs. The index closely tracks the deterioration of the conventional Jacobian's nonsingularity margin, whereas classical tools such as the $L$-index provide only weak variation and fail to signal emerging instability. Moreover, SCR-type strength measures remain high even when power-flow solvability is lost, confirming that voltage instability in converter-dominated grids stems primarily from converter state constraints rather than fault-level strength.

The results confirm that converter operating constraints fundamentally determine solvability margins in converter-rich grids. The proposed $C_W$ index captures this regime transition on a per-bus basis and in a computationally efficient manner, which provides system operators with more actionable stability certificates than traditional steady-state stability metrics under mixed operational constraints.

\appendices
\section{Equivalence of the Reduced Wirtinger and Conventional Jacobians}
\setcounter{equation}{0}  
\renewcommand{\theequation}{A\arabic{equation}}  
\label{app:eqv}

This appendix establishes the equivalence between the conventional (real-variable) power flow Jacobian and the reduced Wirtinger Jacobian for a general system comprising $n_u$ unconstrained and $n_c$ constrained buses.

\subsection{Row and Column Maps}
Define the \emph{row map} $L$ that transforms complex power variations into their real counterparts:
\begin{equation}
\label{eq:row-map}
\begin{aligned}
\underbrace{\begin{bmatrix} \Delta P_{\mathcal{U}} \\[0.2ex] \Delta Q_{\mathcal{U}}  \\[0.2ex] \Delta P_{\mathcal{C}} \end{bmatrix}}_{\text{real powers}}
&=
\underbrace{\begin{bmatrix}
\frac{1}{2}I_{n_u} & \frac{1}{2}I_{n_u} & 0 \\
\frac{1}{2\jj}I_{n_u} & -\frac{1}{2\jj}I_{n_u} & 0 \\
0 & 0 & I_{n_c}
\end{bmatrix}}_{=:L}\,
\underbrace{\begin{bmatrix} \Delta S_{\mathcal{U}} \\[0.2ex] \Delta S_{\mathcal{U}}^{*} \\[0.2ex] \Delta P_{\mathcal{C}} \end{bmatrix}}_{\text{complex inputs}},
\\[0.6ex]
\text{equivalently}\quad
L &=
\begin{bmatrix}
K\!\otimes\! I_{n_u} & 0 \\[0.3ex]
0 & I_{n_c}
\end{bmatrix},
\qquad
K \coloneqq
\begin{bmatrix}
\frac{1}{2} & \frac{1}{2} \\
\frac{1}{2\jj} & -\frac{1}{2\jj}
\end{bmatrix},
\\[0.6ex]
\det L &= (\det K)^{n_u} = \left(-\frac{1}{2j}\right)^{n_u}\neq 0.
\end{aligned}
\end{equation}

Let $Y_{\mathrm{red}}:=Z^{-1}$ be the reduced admittance matrix, and define the mapping between voltage perturbations and real polar state increments:
\begin{equation}
\begin{bmatrix} \Delta V_{\mathcal{U}} \\ \Delta V_{\mathcal{C}} \end{bmatrix}
=
\underbrace{\begin{bmatrix}
\Diag{e^{\jj\theta_{\mathcal{U}}}} & 0                 & \jj\,\Diag{V_{\mathcal{U}}}\\
0                      & \jj\,\Diag{V_{\mathcal{C}}}   & 0
\end{bmatrix}}_{=:M}
\begin{bmatrix} \Delta U_{\mathcal{U}} \\ \Delta \theta_{\mathcal{C}} \\ \Delta \theta_{\mathcal{U}} \end{bmatrix}.
\end{equation}

Then,
\begin{equation}
\begin{bmatrix} \Delta I_{\mathcal{U}} \\ \Delta I_{\mathcal{C}} \end{bmatrix}
= Y_{\mathrm{red}}
\begin{bmatrix} \Delta V_{\mathcal{U}} \\ \Delta V_{\mathcal{C}} \end{bmatrix}
= Y_{\mathrm{red}}\, M
\begin{bmatrix} \Delta U_{\mathcal{U}} \\ \Delta \theta_{\mathcal{C}} \\ \Delta \theta_{\mathcal{U}} \end{bmatrix}.
\end{equation}

Stacking $\Delta I_{\mathcal{U}}, \Delta I_{\mathcal{U}}^{*}, \Delta I_{\mathcal{C}}$ yields the \emph{column map} from real polar states to Wirtinger current increments:
\begin{equation}
\underbrace{\begin{bmatrix} \Delta I_{\mathcal{U}} \\ \Delta I_{\mathcal{U}}^{*} \\ \Delta I_{\mathcal{C}} \end{bmatrix}}_{\Delta x_W}
=
\underbrace{\begin{bmatrix}
\frac{\partial I_{\mathcal{U}}}{\partial U_{\mathcal{U}}} & \frac{\partial I_{\mathcal{U}}}{\partial \theta_{\mathcal{C}}} & \frac{\partial I_{\mathcal{U}}}{\partial \theta_{\mathcal{U}}} \\
\big(\frac{\partial I_{\mathcal{U}}}{\partial U_{\mathcal{U}}}\big)^{*} & \big(\frac{\partial I_{\mathcal{U}}}{\partial \theta_{\mathcal{C}}}\big)^{*} & \big(\frac{\partial I_{\mathcal{U}}}{\partial \theta_{\mathcal{U}}}\big)^{*} \\
\frac{\partial I_{\mathcal{C}}}{\partial U_{\mathcal{U}}} & \frac{\partial I_{\mathcal{C}}}{\partial \theta_{\mathcal{C}}} & \frac{\partial I_{\mathcal{C}}}{\partial \theta_{\mathcal{U}}}
\end{bmatrix}}_{=:R}
\underbrace{\begin{bmatrix} \Delta U_{\mathcal{U}} \\ \Delta \theta_{\mathcal{C}} \\ \Delta \theta_{\mathcal{U}} \end{bmatrix}}_{\Delta x_{\mathrm{conv}}},
\label{eq:R-map}
\end{equation}
with the blocks read directly from $Y_{\mathrm{red}}M$:
\[
\begin{aligned}
&\frac{\partial I_{\mathcal{U}}}{\partial U_{\mathcal{U}}} = Y_{\mathcal{UU}}\,\Diag{e^{\jj\theta_{\mathcal{U}}}}, \quad
\frac{\partial I_{\mathcal{U}}}{\partial \theta_{\mathcal{U}}} = \jj\,Y_{\mathcal{UU}}\,\Diag{V_{\mathcal{U}}}, \\[3pt]
&\frac{\partial I_{\mathcal{C}}}{\partial U_{\mathcal{U}}} = Y_{\mathcal{CU}}\,\Diag{e^{\jj\theta_{\mathcal{U}}}}, \quad
\frac{\partial I_{\mathcal{C}}}{\partial \theta_{\mathcal{U}}} = \jj\,Y_{\mathcal{CU}}\,\Diag{V_{\mathcal{U}}}, \\[3pt]
&\frac{\partial I_{\mathcal{U}}}{\partial \theta_{\mathcal{C}}} = \jj\,Y_{\mathcal{UC}}\,\Diag{V_{\mathcal{C}}}, \quad
\frac{\partial I_{\mathcal{C}}}{\partial \theta_{\mathcal{C}}} = \jj\,Y_{\mathcal{CC}}\,\Diag{V_{\mathcal{C}}}.
\end{aligned} 
\]

Here \(dV_{\mathcal{U}}=\D{e^{\jj\theta_{\mathcal{U}}}}\,dU_{\mathcal{U}}+\jj\,\D{V_{\mathcal{U}}}\,d\theta_{\mathcal{U}}\), \(dV_{\mathcal{C}}=\jj\,\D{V_{\mathcal{C}}}\,d\theta_{\mathcal{C}}\), and \([dI_{\mathcal{U}};dI_{\mathcal{C}}]=Y_{\mathrm{red}}[dV_{\mathcal{U}};dV_{\mathcal{C}}]\). At a regular operating point, \(R\) is nonsingular.

\subsection{Equivalence Theorem}
\begin{theorem}[Identical singular sets]
\label{thm:equivalence}
Let \(J_{\mathrm{conv}}=\partial [P_{\mathcal{U}};P_{\mathcal{C}};Q_{\mathcal{U}}]/\partial[\theta_{\mathcal{U}};\theta_{\mathcal{C}};U_{\mathcal{U}}]\) be the conventional Jacobian and \(J_{\mathrm{red}}=\partial [S_{\mathcal{U}};S_{\mathcal{U}}^{*};P_{\mathcal{C}}]/\partial[I_{\mathcal{U}};I_{\mathcal{U}}^{*};I_{\mathcal{C}}]\) the reduced Wirtinger Jacobian in \eqref{eq:Jred-general}. Then
\begin{equation}
\label{eq:chain}
J_{\mathrm{conv}} \;=\; L\,J_{\mathrm{red}}\,R,
\qquad
\det J_{\mathrm{conv}}=0 \;\Longleftrightarrow\; \det J_{\mathrm{red}}=0,
\end{equation}
whenever \(L\) and \(R\) are nonsingular (which holds generically at regular operating points).
\end{theorem}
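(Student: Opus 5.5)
The plan is a chain-rule argument. The real power-flow map $[P_{\mathcal U};Q_{\mathcal U};P_{\mathcal C}]$ can be viewed as a composition of three maps. The first sends the real polar state $\Delta x_{\mathrm{conv}}=[\Delta U_{\mathcal U};\Delta\theta_{\mathcal C};\Delta\theta_{\mathcal U}]$ to the Wirtinger current increment $\Delta x_W=[\Delta I_{\mathcal U};\Delta I_{\mathcal U}^*;\Delta I_{\mathcal C}]$ through $R$ in \eqref{eq:R-map}. The second sends $\Delta x_W$ to $[\Delta S_{\mathcal U};\Delta S_{\mathcal U}^*;\Delta P_{\mathcal C}]$ through $J_{\mathrm{red}}$ in \eqref{eq:Jred-def}. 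The third sends those complex increments to real powers through $L$ in \eqref{eq:row-map}.

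\textbf{Step 1 (column consistency).} Along the admissible motion $\Delta V_{\mathcal C}=\jj\,\Diag{V_{\mathcal C}}\Delta\theta_{\mathcal C}$ we have $d|V_j|=0$. I would check that the conjugate current increments generated by $R$ obey the tangent relation \eqref{eq:xi-generic}, namely $\Delta I_{\mathcal C}^*=\Xi\,\Delta I_{\mathcal C}$, to the order used in defining $\kappa_j$. Then the total differential \eqref{eq:dFfull}, with $\Delta I^*$ eliminated via \eqref{eq:column-merge}, equals $J_{\mathrm{red}}\,\Delta x_W$. For the unconstrained blocks no elimination is needed: $\Delta I_{\mathcal U}$ and $\Delta I_{\mathcal U}^*$ both appear explicitly as rows of $R$, and the second block row of $R$ is by construction the conjugate of the first.

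\textbf{Step 2 (row map).} Since $P=\tfrac12(S+S^*)$ and $Q=\tfrac{1}{2\jj}(S-S^*)$, applying $L$ maps the Wirtinger outputs exactly to the real outputs. Combining Steps 1 and 2 gives $L\,J_{\mathrm{red}}\,R\,\Delta x_{\mathrm{conv}}=[\Delta P_{\mathcal U};\Delta Q_{\mathcal U};\Delta P_{\mathcal C}]$ for every $\Delta x_{\mathrm{conv}}$. The left side is also $J_{\mathrm{conv}}\Delta x_{\mathrm{conv}}$, up to the fixed permutations of rows and columns that relate the orderings $[P_{\mathcal U};P_{\mathcal C};Q_{\mathcal U}]$ and $[\theta_{\mathcal U};\theta_{\mathcal C};U_{\mathcal U}]$ to those used for $L$ and $R$. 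These permutation matrices are unimodular and can be absorbed into $L$ and $R$, so $J_{\mathrm{conv}}=LJ_{\mathrm{red}}R$.

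\textbf{Step 3 (determinants).} Multiplicativity gives $\det J_{\mathrm{conv}}=\det L\,\det J_{\mathrm{red}}\,\det R$. Equation \eqref{eq:row-map} shows $\det L\neq0$, and $R$ is nonsingular by the hypothesis, which the paper states holds at regular operating points. Hence $\det J_{\mathrm{conv}}=0$ if and only if $\det J_{\mathrm{red}}=0$.

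\textbf{Main obstacle.} The hard part is Step 1, and it arises because $J_{\mathrm{red}}$ was built using the local self-impedance relation $dV_i=Z_{ii}dI_i$ to define $\kappa_i$. The network actually gives $d\mathbf V=\mathbf Z\,d\mathbf I$, so the tangent relation at a constrained bus holds exactly only if the off-diagonal contributions are accounted for. I would first try to interpret the theorem with $\Xi$ representing the exact tangent relation induced by $R$, i.e. with $\Delta I_{\mathcal C}^*$ expressed through $Y_{\mathrm{red}}$ and $\Delta V_{\mathcal C}^*=-\Diag{V_{\mathcal C}^*/V_{\mathcal C}}\Delta V_{\mathcal C}$. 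This makes the identity exact, and the local $\kappa_i$ of \eqref{eq:kappa-def} is then its diagonal approximation. If that is unacceptable, the fallback is to keep the conjugate constrained currents as dependent variables and verify directly that the merged columns reproduce $\partial P_{\mathcal C}/\partial\theta_{\mathcal C}$, so that the factorization holds exactly.
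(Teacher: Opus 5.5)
\textbf{Verdict: genuine gap, shared with the paper's proof.} Your route is the paper's route: row map $L$, column map $R$, chain rule, and multiplicativity of $\det$. Your handling of the orderings by permutation matrices is a fair tidy-up. The paper, however, obtains $J_{\mathrm{conv}}=L\,J_{\mathrm{red}}\,R$ only by declaring it true ``by construction.'' You are right that Step~1 is where this has to be earned, and for the matrix in \eqref{eq:Jred-general} it cannot be earned.

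Even for the column-merged matrix of \eqref{eq:Jred-def}, along admissible motions $dF-J_{\mathrm{red}}\,dx_W=\frac{\partial F}{\partial I_{\mathcal C}^{*}}\bigl(dI_{\mathcal C}^{*}-\Xi\,dI_{\mathcal C}\bigr)$ with $F=[S_{\mathcal U};S_{\mathcal U}^{*};P_{\mathcal C}]$. The tangent relation is not an identity on the image of $R$. Eliminating with $d\mathbf V=\mathbf Z\,d\mathbf I$, $d\mathbf V^{*}=\mathbf Z^{*}d\mathbf I^{*}$ and $dV_{\mathcal C}^{*}=-D\,dV_{\mathcal C}$, where $D=\Diag{V_{\mathcal C}^{*}/V_{\mathcal C}}$, the exact relation is
\[
dI_{\mathcal C}^{*}=-\bigl(Z_{\mathcal{CC}}^{*}\bigr)^{-1}\bigl[D\,Z_{\mathcal{CU}}\,dI_{\mathcal U}+Z_{\mathcal{CU}}^{*}\,dI_{\mathcal U}^{*}+D\,Z_{\mathcal{CC}}\,dI_{\mathcal C}\bigr].
\]
The local rule $dI_i^{*}=\kappa_i\,dI_i$ is what remains after dropping $Z_{\mathcal{CU}}$ and the off-diagonal part of $Z_{\mathcal{CC}}$. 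In fact, if $Y_{\mathcal{CU}}\neq 0$, no complex-linear relation in $dI_{\mathcal C}$ alone can hold. With $\Delta\theta_{\mathcal C}=0$, the increments $dI_{\mathcal C}=Y_{\mathcal{CU}}\,dV_{\mathcal U}$ fill a nonzero complex subspace, and conjugation is not complex-linear there. So generically $L\,J_{\mathrm{red}}\,R\neq J_{\mathrm{conv}}$.

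The literal matrix \eqref{eq:Jred-general} fails even with $n_u=0$. Take a slack bus $V_0=1$, two PV buses with $|V_1|=|V_2|=1$, three lossless lines of reactance $1$~p.u., and $\theta_1=\theta_2=\pi/2$.
\begin{itemize}
\item $J_{\mathrm{conv}}=\begin{bmatrix}1&-1\\-1&1\end{bmatrix}$ is singular.
\item $L$ is the identity and $R=\jj\,Y_{\mathcal{CC}}\Diag{V_{\mathcal C}}$ is nonsingular.
\item $I_j=1+\jj$, $\kappa_j=-1$ and $\alpha_j=\tfrac{1}{3}+\tfrac{\jj}{6}$, so $J_{\mathrm{red}}=\Diag{\alpha_j^{*}+\kappa_j\alpha_j}$ equals $-\tfrac{\jj}{3}$ times the identity, which is nonsingular.
\item $J_{\mathrm{red}}R=\tfrac{1}{3}\begin{bmatrix}2&-1\\-1&2\end{bmatrix}\neq J_{\mathrm{conv}}$.
\end{itemize}
So your fallback, checking that the merged columns reproduce $\partial P_{\mathcal C}/\partial\theta_{\mathcal C}$, fails. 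The statement as written is false for this $J_{\mathrm{red}}$.

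Your first fix does give a true identity, but for a different matrix. Substituting the exact relation into \eqref{eq:dFfull} also alters the unconstrained columns, through $\partial F/\partial I_{\mathcal C}^{*}$ times the $dI_{\mathcal U}$ and $dI_{\mathcal U}^{*}$ coefficients. This contradicts your remark that those blocks need no change. It also replaces the diagonal $\Xi$ by a full matrix. The resulting matrix is just $L^{-1}J_{\mathrm{conv}}R^{-1}$ written in Wirtinger coordinates. ``Identical singular sets'' then becomes a tautology and says nothing about the matrix from which $C_{W,i}$ is computed.

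What is missing, in your proposal and in the paper's proof alike, is an argument linking the singularity of the explicit local-$\kappa$ matrix to that of $J_{\mathrm{conv}}$. The example shows no such argument exists in general. Note also that $M$ imposes $d|V_j|=0$ for every $j\in\mathcal C$. Current-limited buses are therefore not covered by the stated $J_{\mathrm{conv}}$ and $R$ at all.
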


\begin{IEEEproof}[Proof]
By construction, \(\Delta F_{\mathrm{conv}}=L\,\Delta F_{\mathrm{red}}\) (row map) and \(\Delta x_{\mathrm{red}}=R\,\Delta x_{\mathrm{conv}}\) (column map). Chain rule yields \(J_{\mathrm{conv}}=L\,J_{\mathrm{red}}\,R\). Since \(\det(L)\neq 0\) and \(R\) is nonsingular away from pathological points (e.g., zero voltages or degenerate admittances), \(J_{\mathrm{conv}}\) and \(J_{\mathrm{red}}\) have identical singular sets.
\end{IEEEproof}

\begin{remark}
The use of Wirtinger derivatives to construct complex-valued Jacobians that are equivalent (up to nonsingular row/column transformations) to the conventional real-variable Jacobian is consistent with prior work on Wirtinger-based load flow and optimal power flow
\cite{garces2019wirtingerpf,dvzafic2018high},
and with the general Wirtinger calculus framework
\cite{brandwood1983complexgradient,kreutzdelgado2009crcalculus,hjorungnes2011complex}.
Hence, the relation~\eqref{eq:chain} in
Theorem~\ref{thm:equivalence} preserves the classical Jacobian
singularity boundary used in voltage stability analysis
\cite{sauer1990jacobian,ajjarapu1992cpf,gao1992modal}.
\end{remark}

\bibliographystyle{IEEEtran}
\bibliography{references}

\end{document}